\def\BState{\State\hskip-\ALG@thistlm}
\newcommand{\bvec}[1]{\bm{#1}}
\newcommand{\real}{\mathbb{R}}
\newcommand{\supp}[1]{\texttt{supp}\left(#1\right)}
\newcommand{\norm}[1]{\left\|#1\right\|_2}
\newcommand{\opnorm}[2]{\left\|#1\right\|_{#2}}
\newcommand{\abs}[1]{\left|#1\right|}
\newcommand{\diag}[1]{\texttt{diag}\left(#1\right)}
\newcommand{\prob}[1]{\mathbb{P}\left(#1\right)}
\newcommand{\expect}[1]{\mathbb{E}\left[#1\right]}
\newcommand{\expectsuff}[2]{\mathbb{E}_{#1}\left[#2\right]}
\newtheorem{thm}{Theorem}[section]
\newtheorem{rmk}{Remark}[section]
\journal{Elsevier Signal Processing}
\begin{document}
\setlength\abovedisplayskip{0pt}
\setlength\belowdisplayskip{0pt}
\begin{frontmatter}
\title{On the MMSE Estimation of Norm of a Gaussian Vector under Additive White Gaussian Noise with Randomly Missing Input Entries}
\setlength{\abovedisplayskip}{0pt}
\setlength{\belowdisplayskip}{0pt}
\vspace{-5mm}
\author{Samrat Mukhopadhyay\corref{cor}}
\address{Department of Electronics and Electrical Communication
	Engineering, Indian Institute of Technology, Kharagpur, INDIA }
\ead{samratphysics@gmail.com}
\cortext[cor]{Corresponding Author}



    %
\begin{abstract}
This paper considers the task of estimating the $l_2$ norm of a $n$-dimensional random Gaussian vector from noisy measurements taken after many of the entries of the vector are \emph{missed} and only $K\ (0\le K\le n)$ entries are retained and others are set to $0$. Specifically, we evaluate the minimum mean square error (MMSE) estimator of the $l_2$ norm of the unknown Gaussian vector performing  measurements under additive white Gaussian noise (AWGN) on the vector after the data missing and derive expressions for the corresponding mean square error (MSE). We find that the corresponding MSE normalized by $n$ tends to $0$ as $n\to \infty$ when $K/n$ is kept constant. Furthermore, expressions for the MSE is derived when the variance of the AWGN noise tends to either $0$ or $\infty$. These results generalize the results of Dytso et al.~\cite{dytso2019estimating} where the case $K=n$ is considered, i.e. the MMSE estimator of norm of random Gaussian vector is derived from measurements under AWGN noise without considering the data missing phenomenon. 
\end{abstract}
\begin{keyword}
MMSE estimation; Additive White Gaussian Noise (AWGN); Missing data.
\end{keyword}
\end{frontmatter}
\section{Introduction}
\label{sec:intro}

\begin{figure}[ht!]
	\centering
	\includegraphics[height=0.5in,width=2in]{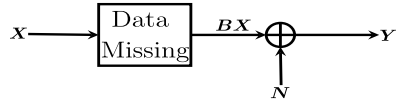}
	\caption{Block diagram of the measurement model.}
	\label{fig:measurement-model}
\end{figure}
In this paper we consider the estimation of the $l_2$ norm of a random vector with missing entries under measurement noise. In particular, the measurement model that we consider in this paper is the following: \begin{align}
\label{eq:measurement-model}
\bvec{Y} & =\bvec{B X} + \bvec{N},
\end{align}
where $\bvec{X}\sim\mathcal{N}(\bvec{0},\bvec{I}_n),\ \bvec{N}\sim\mathcal{N}(\bvec{0},\sigma^2\bvec{I}_n)$, with $\bvec{X}$ being the input vector and $\bvec{N}$ the measurement noise vector, which are independent. Furthermore, $\bvec{B}=\diag{B_1,\cdots,B_n},$ where $B_i\in\{0,1\}$ and $\sum_{i=1}^n B_i=K$, for a given non-negative integer $K\le n$. The data missing process is assumed to be independent of $\bvec{X}$ and $\bvec{N}$, i.e. $\bvec{B}$ is independent of $\bvec{X},\bvec{N}$. The block diagram of the measurement process is shown in Fig.~\ref{fig:measurement-model}. 

The problem of estimating $l_2$ norm of a vector from noisy measurements can have different applications in domains such as \emph{distributed computing}~\cite{dwork2014algorithmic}, \emph{wireless systems}~\cite{zhang2007user} and \emph{wireless networks}~\cite{boyd2006randomized}, where different nodes in the system might want to estimate the $l_2$ norm of some vector of interest, either in a centralized or distributed manner in order to perform different tasks such as secure transmission, selection of best transmitting node etc. Dytso et al.~\cite{dytso2019estimating} have studied the optimal MMSE estimator for the $l_2$ norm of a random Gaussian vector in additive white Gaussian noise when no entry of the input vector is missing. However, the problem of \emph{missing input data} entries at random instances is quite common in many signal processing and communication applications~\cite{chen2000system,fang2009parameter,loh2011high,chen2013robust,mukhopadhyay20-semea,mukhopadhyay20-imdlms}. Such missing input data model is also prevalent in information and coding theory where \emph{deletion channels}~\cite{mitzenmacher2009survey} are used to model a communication channel where sequence of input bits are deleted at random locations.  Furthermore, the measurement model of Eq.~\eqref{eq:measurement-model} is also useful in the context of applications where storage is scarce so that only a few of the input vectors can be loaded while the others have to be considered $0$.  

In this work we find explicit expressions for the minimum mean square error (MMSE) estimator of $\norm{\bvec{X}}$ given the measurements according to the model~\eqref{eq:measurement-model}. In particular, we obtain an expression for the conditional expectation of $\norm{\bvec{X}}$ given the measurement vector $\bvec{Y}$. We further characterize the mean square error (MSE) for this optimal estimator for $0\le K\le n$ and demonstrate using numerical simulations that even when $K$ is small, the MSE is close to the MSE corresponding to the case $K=n$. Our main contribution can be summarized as the generalization of the results of~\cite{dytso2019estimating} by characterizing the MMSE estimator in the presence of the data missing phenomenon where only $0\le K\le N$ entries are retained and showing that even with small $K$, the MMSE estimation of a Gaussian vector under AWGN noise is pretty accurate, especially for large input size $n$. 

In the following, let $[n]:=\{1,\cdots,n\}$. For any vector $\bvec{x}\in \real^n$ and subset $S\subset [n]$, $\bvec{x}_S$ denotes the vector $\bvec{x}$ restricted to $S$, i.e., $\bvec{x}_S$ contains those entries of $\bvec{x}$ that are indexed by $S$. Let $\supp{\bvec{x}}$ denote the support of $\bvec{x}$, i.e., $[\bvec{x}]_i=0$ if $i\notin \supp{\bvec{x}}$. We denote by $\bvec{I}_n$, the identity matrix of size $n\times n$ and $\bvec{I}_S$ denotes the submatrix containing the columns of $\bvec{I}_n$ indexed by $S$. We denote by $\prob{\cdot},\ \expect{\cdot}$ the probability and expectation operators, respectively, and by $\expectsuff{R}{\cdot}$ we denote expectation with respect to a random variable $R$. Let $\delta(\cdot)$ denote the Dirac delta function, $_pF_q$ denote the generalized hypergeometric function~\cite{olver2010nist}, defined as $_pF_q(a_1,\cdots,a_p;b_1,\cdots,b_q;x)=\sum_{k=0}^\infty \frac{(a_1)_k\cdots(a_p)_k}{(b_1)_k\cdots(b_q)_k}\frac{x^k}{k!}$, where the Pochhammer symbol~\cite{olver2010nist} $(a)_k$ is defined as $(a)_k=a(a+1)\cdots(a+k-1)$ for any integer $k\ge 1$, and $(a)_0=1$. We denote by $f_X(\cdot)$ the probability density function of any continuous random variable $X$. $B(\cdot,\cdot)$ denotes the beta function defined as $B(m,n)=\frac{\Gamma(m)\Gamma(n)}{\Gamma(m+n)}$, where $\Gamma(\cdot)$ is the gamma function. 
\section{MMSE Estimator and Its Characterization}
\label{sec:mmse-estimator-characterization}
In this section we present our results on the MMSE estimator of $\norm{\bvec{X}}$ given the measurement $\bvec{Y}$ according to the measurement model~\eqref{eq:measurement-model}. The following Theorem presents a closed form expression of the MMSE estimator.
\begin{thm}
	\label{thm:mmse-estimator}
	Let $\bvec{X},\bvec{Y},\bvec{B},\bvec{N}$ be as described in the measurement model~\eqref{eq:measurement-model}. Let $\prob{\supp{\bvec{B}}=S} = 1/\binom{n}{K}$ for any $S\subset [n]$ with $\abs{S}=K$ and $J=n-K$. Then, \begin{align}
	\label{eq:mmse-estimator-expectation-expression-given-B}
	\expect{\norm{\bvec{X}}\mid \bvec{Y}=\bvec{y},\bvec{B}=\bvec{b}} & = \expectsuff{U_1,U_2}{\sqrt{\frac{\sigma^2}{\sigma^2+1}U_1+U_2}}\\
	\label{eq:mmse-estimator-series-expression-given-B}
	\ & =
	 \sqrt{2}\left(\frac{\sigma^2}{\sigma^2+1}\right)^{\frac{J+1}{2}}e^{-\frac{\norm{\bvec{y}_S}^2}{2\sigma^2(\sigma^2+1)}}\sum_{l=0}^\infty a_l\left(\frac{\norm{\bvec{y}_S}^{2}}{2\sigma^2(\sigma^2+1)}\right)^l,
	\end{align} where, with $\supp{\bvec{b}}=S$, $U_1,U_2$ are independent; $U_1$ has a noncentral $\chi^2$ distribution with noncentrality parameter $\lambda = \frac{\norm{\bvec{y}_S}^2}{\sigma^2(\sigma^2+1)}$ and degrees of freedom $K$, while $U_2$ has a central $\chi^2$ distribution with degrees of freedom $J$.
	Consequently, \begin{align}
	\label{eq:mmse-estimator-expression}
	\expect{\norm{\bvec{X}}\mid \bvec{Y}=\bvec{y}} & = \frac{\sqrt{2}\left(\frac{\sigma^2}{\sigma^2+1}\right)^{\frac{J+1}{2}}}{\binom{n}{K}}\sum_{l=0}^\infty a_l\sum_{\substack{S\subset [n]:\\ \abs{S}=K}}e^{-\frac{\norm{\bvec{y}_S}^2}{2\sigma^2(\sigma^2+1)}} \left(\frac{\norm{\bvec{y}_S}^{2}}{2\sigma^2(\sigma^2+1)}\right)^l\\
	\label{eq:mmse-estimator-alternate-expression-as-expectation}
	\ &
	 =\sqrt{2}\left(\frac{\sigma^2}{\sigma^2+1}\right)^{\frac{J+1}{2}}\sum_{l=0}^\infty a_l\expectsuff{S}{e^{-Z_S} Z_S^l\mid \bvec{Y}=\bvec{y}},
	\end{align}
	where $Z_S=\frac{\norm{\bvec{Y}_S}^2}{2\sigma^2(\sigma^2+1)}$, and \begin{align}
	\label{eq:al-definition}
	a_l & = \frac{1}{l!}\sum_{k=0}^\infty \frac{\left(\frac{J}{2}\right)_k\Gamma\left(\frac{n+1}{2}+k+l\right)}{\Gamma\left(\frac{n}{2}+k+l\right)}\frac{1}{k!(\sigma^2+1)^k}.
	\end{align}
\end{thm}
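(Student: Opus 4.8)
The plan is to first compute the conditional estimator given the missing pattern $\bvec{B}=\bvec{b}$ and then average over the distribution of $\bvec{B}$. Write $S=\supp{\bvec{b}}$, so $\abs{S}=K$ and $\abs{S^c}=J$. Conditioned on $\bvec{B}=\bvec{b}$ the observation splits as $\bvec{Y}_S=\bvec{X}_S+\bvec{N}_S$ and $\bvec{Y}_{S^c}=\bvec{N}_{S^c}$, so the entries $\{Y_i\}_{i\notin S}$ carry no information about $\bvec{X}$. By independence of the coordinates and the standard Gaussian--Gaussian conjugacy, the posterior law of $\bvec{X}$ given $\bvec{Y}=\bvec{y},\bvec{B}=\bvec{b}$ factorizes: for $i\in S$, $X_i\sim\mathcal{N}\!\left(\tfrac{y_i}{\sigma^2+1},\tfrac{\sigma^2}{\sigma^2+1}\right)$ independently, whereas for $i\notin S$ the posterior remains the prior $X_i\sim\mathcal{N}(0,1)$. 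Standardizing the $S$-block, $\sum_{i\in S}X_i^2=\tfrac{\sigma^2}{\sigma^2+1}U_1$, where $U_1$ is noncentral $\chi^2$ with $K$ degrees of freedom and noncentrality $\lambda=\norm{\bvec{y}_S}^2/(\sigma^2(\sigma^2+1))$, while $\sum_{i\notin S}X_i^2=U_2$ is central $\chi^2$ with $J$ degrees of freedom, independent of $U_1$. Since $\norm{\bvec{X}}^2=\tfrac{\sigma^2}{\sigma^2+1}U_1+U_2$, taking the conditional expectation of the square root yields Eq.~\eqref{eq:mmse-estimator-expectation-expression-given-B} directly.

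For the series form I would expand the noncentral $\chi^2$ as a Poisson mixture of central ones: $U_1\stackrel{d}{=}\chi^2_{K+2N}$ with $N\sim\mathrm{Poisson}(\lambda/2)$, so that
\begin{align}
\expectsuff{U_1,U_2}{\sqrt{\tfrac{\sigma^2}{\sigma^2+1}U_1+U_2}}=\sum_{l=0}^\infty \frac{e^{-\lambda/2}(\lambda/2)^l}{l!}\,\expect{\sqrt{\tfrac{\sigma^2}{\sigma^2+1}V_l+U_2}},\notag
\end{align}
where $V_l\sim\chi^2_{K+2l}$ is central and independent of $U_2$. The remaining task is to evaluate the square-root moment of the sum of two independent scaled chi-squares (gamma variables with scales $2\sigma^2/(\sigma^2+1)$ and $2$). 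I would obtain the density of this sum by convolving the two gamma densities; the substitution $x=wu$ turns the convolution into an Euler-type integral, which is a confluent hypergeometric ${}_1F_1$. Integrating $\sqrt{w}$ against this density using the Laplace-transform identity $\int_0^\infty e^{-pw}w^{s-1}\,{}_1F_1(a;b;-qw)\,dw=\Gamma(s)p^{-s}\,{}_2F_1(a,s;b;-q/p)$ produces a Gauss hypergeometric ${}_2F_1$ with argument $-1/\sigma^2$. A Pfaff transformation then moves the argument to $1/(\sigma^2+1)$ and converts the numerator parameter into $J/2$, and after collecting the $\Gamma$-ratio one gets $\expect{\sqrt{\tfrac{\sigma^2}{\sigma^2+1}V_l+U_2}}=\sqrt{2}\,(\tfrac{\sigma^2}{\sigma^2+1})^{(J+1)/2}\,l!\,a_l$ with $a_l$ as in Eq.~\eqref{eq:al-definition}. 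Substituting this into the Poisson mixture and using $\lambda/2=\norm{\bvec{y}_S}^2/(2\sigma^2(\sigma^2+1))$ gives Eq.~\eqref{eq:mmse-estimator-series-expression-given-B}.

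Finally, to remove the conditioning on $\bvec{B}$ I would apply the law of total expectation over the support $S$, which is uniform over the $\binom{n}{K}$ subsets of size $K$; summing the conditional series expression against the weights $1/\binom{n}{K}$ yields Eq.~\eqref{eq:mmse-estimator-expression}, and rewriting the uniform average over $S$ as $\expectsuff{S}{\,\cdot\mid\bvec{Y}=\bvec{y}}$ with $Z_S=\norm{\bvec{Y}_S}^2/(2\sigma^2(\sigma^2+1))$ gives Eq.~\eqref{eq:mmse-estimator-alternate-expression-as-expectation}. The main obstacle is the hypergeometric evaluation in the middle step: getting the convolution into clean ${}_1F_1$ form, applying the correct Laplace transform, and---crucially---selecting the Pfaff (rather than Euler) transformation that produces exactly the $(J/2)_k$ and $(\sigma^2+1)^{-k}$ structure of $a_l$. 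One also has to justify interchanging the Poisson sum, the $w$-integration, and the hypergeometric series (via Fubini and dominated convergence) and verify convergence of the double series defining $a_l$, which I expect to be the most delicate bookkeeping.
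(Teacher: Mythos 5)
Your proposal is correct, and while its first and last steps coincide with the paper's, the central step---deriving the series \eqref{eq:mmse-estimator-series-expression-given-B}---takes a genuinely different route. The endpoints agree: your posterior-conjugacy argument for \eqref{eq:mmse-estimator-expectation-expression-given-B} is the same decomposition $Z^2=\frac{\sigma^2}{\sigma^2+1}U_1+U_2$ that the paper reaches by completing the square inside its delta-function integral, and your final averaging over $S$ is verbatim the paper's last line. In the middle, however, the paper keeps the noncentral $\chi^2_K$ density whole (in its ${}_0F_1$ form), convolves it with the central $\chi^2_J$ density, reduces the convolution to the auxiliary integral $R(\alpha,\beta,\gamma,\nu,\epsilon)$ of \eqref{eq:r-function} evaluated as a double series in the appendix, and then integrates the density of $Z^2$ term by term. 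You instead strip off the noncentrality first, writing $U_1$ as a Poisson$(\lambda/2)$ mixture of central $\chi^2_{K+2l}$ variables $V_l$, and evaluate each mixture component in closed form; your chain (gamma convolution $\to$ ${}_1F_1$ $\to$ Laplace transform $\to$ ${}_2F_1$ $\to$ Pfaff) indeed yields
\begin{align}
\expect{\sqrt{\tfrac{\sigma^2}{\sigma^2+1}V_l+U_2}} & = \sqrt{2}\left(\frac{\sigma^2}{\sigma^2+1}\right)^{\frac{J+1}{2}}\frac{\Gamma\left(\frac{n+1}{2}+l\right)}{\Gamma\left(\frac{n}{2}+l\right)}\,{}_2F_1\left(\frac{J}{2},\frac{n+1}{2}+l;\frac{n}{2}+l;\frac{1}{\sigma^2+1}\right)=\sqrt{2}\left(\frac{\sigma^2}{\sigma^2+1}\right)^{\frac{J+1}{2}}l!\,a_l,\nonumber
\end{align}
where the last equality follows from $\Gamma(a)(a)_k=\Gamma(a+k)$ applied to \eqref{eq:al-definition}; combined with the Poisson weights $e^{-\lambda/2}(\lambda/2)^l/l!$ this is exactly \eqref{eq:mmse-estimator-series-expression-given-B}. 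What each route buys: yours produces a compact named-function formula for $l!\,a_l$ and uses only standard identities, while the paper's is elementary and self-contained; note that the appendix substitution $x\to 1-x$ in the paper's Euler integral is precisely the integral-representation avatar of your Pfaff transformation, so the two computations are structurally cognate. Two caveats on your version. First, your intermediate ${}_2F_1$ has argument $-1/\sigma^2$, which lies outside the unit disc whenever $\sigma\le 1$, so that step must be read as an identity between analytic continuations (legitimate, since the argument stays on the negative axis away from the cut); you can avoid the issue entirely by pulling the smaller-scale exponential $e^{-w(\sigma^2+1)/(2\sigma^2)}$ out of the gamma convolution instead, which makes the ${}_1F_1$ argument positive and lands the Laplace transform directly at argument $1/(\sigma^2+1)$, with no Pfaff step needed. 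Second, your closing appeal to the law of total expectation with the uniform weights $1/\binom{n}{K}$ mirrors the paper exactly; strictly speaking the tower property would call for the posterior weights $\prob{\supp{\bvec{B}}=S\mid \bvec{Y}=\bvec{y}}\propto e^{Z_S}$ rather than the prior ones, but since the paper's own proof (and the theorem statement) average against the prior, this is a shared feature of both arguments and not a divergence of yours from the paper's.
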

\begin{proof}
	The proof is deferred to Section~\ref{sec:proof-thm-mmse-estimator}.
\end{proof}
\begin{rmk}
	The Theorem~\ref{thm:mmse-estimator} provides several alternate descriptions of the MMSE estimator. First of all, while Eq.~\eqref{eq:mmse-estimator-expectation-expression-given-B} enables one to evaluate the conditional expectation $\expect{\norm{\bvec{X}}\mid \bvec{Y}=\bvec{y},\bvec{B}=\bvec{b}}$ by Monte-Carlo simulations, Eq.~\eqref{eq:mmse-estimator-series-expression-given-B} can be used to find the estimator by evaluating terms of a series expansion. Furthermore, while Eq.~\eqref{eq:mmse-estimator-expression} is a simple consequence of Eq.~\eqref{eq:mmse-estimator-series-expression-given-B}, Eq.~\eqref{eq:mmse-estimator-alternate-expression-as-expectation} gives another interpretation of the MMSE estimator as an expectation over randomly chosen subsets $S\subset[n]$ such that $\abs{S}=K$.  
\end{rmk}
\begin{figure}[t!]
		\centering
		\includegraphics[height=2in,width=3.5in]{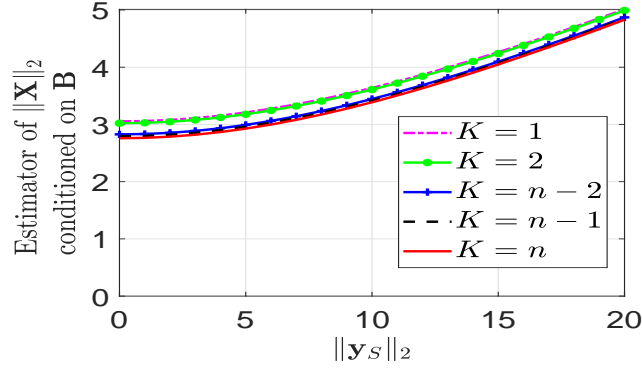}
		\caption{Comparison of MMSE estimator conditioned on $\bvec{B}$ for different values of $K$ where $\abs{S}=K$. Here $n=10,\sigma=2$.}
		\label{fig:mse-conditional-estimator-vs-norm-y}
\end{figure}
	%
%
The conditional MMSE estimator $\expect{\norm{\bvec{X}}\mid \bvec{Y}=\bvec{y},\bvec{B} = \bvec{b}}$ is evaluated for different values of $\norm{\bvec{y}_S}$, where $\abs{S}=K$ and the results are plotted in Fig.~\ref{fig:mse-conditional-estimator-vs-norm-y} where it can be observed that even with small number of actual measurements $K (K=1,2)$, the MMSE estimator is pretty close to the one obtained when actual measurement of all the $n$ coordinates are considered. 

We now proceed to calculate the MSE corresponding to the MMSE estimator derived in Eq.~\eqref{eq:mmse-estimator-expression}. The following Theorem presents an expression for the MSE for the estimator in Eq.~\eqref{eq:mmse-estimator-expression}, which is now on referred to as \texttt{mmse}.
\begin{thm}
	\label{thm:mmse-expression}
	Let $\bvec{X},\bvec{Y},\bvec{B},\bvec{N}$ be described as in model~\eqref{eq:measurement-model} and $\prob{\supp{\bvec{B}}=S} = 1/\binom{n}{K}$ for any $S\subset [n]$ with $\abs{S}=K$ and $J=n-K$. Then,
	\begin{align}
	\label{eq:mmse-expression}
	\texttt{mmse} & = n - \frac{2\left(\frac{\sigma^2}{\sigma^2+1}\right)^{J+1}}{\binom{n}{K}}\sum_{\substack{r_{\min}\le r\le K,,\\i,j\ge 0}}\binom{K}{r}\binom{n-K}{K-r}a_ia_jh_{ij}(r),
	\end{align}
	where $r_{\min}=\max\{0,2K-n\}$ and
	\begin{align}
	\label{eq:h_ij-definition}
	h_{ij}(r) & = \left\{
	\begin{array}{ll}
		\frac{\sigma^{2K}\Gamma\left(i+\frac{K}{2}\right)\Gamma\left(j+\frac{K}{2}\right)}{(\sigma^2+1)^{K+i+j}\left(\Gamma\left(\frac{K}{2}\right)\right)^2}, & r = 0,\\
		\sum_{\substack{0\le s\le i,\\ 0\le t\le j}} \binom{i}{s}\binom{j}{t} \frac{\sigma^{2K-r}\Gamma\left(s+\frac{K-r}{2}\right)\Gamma\left(t+\frac{K-r}{2}\right)\Gamma\left(i+j-(s+t)+\frac{r}{2}\right)}{(\sigma^2+1)^{s+t+K-r}(\sigma^2+2)^{i+j-(s+t)+\frac{r}{2}}\left(\Gamma\left(\frac{K-r}{2}\right)\right)^2\Gamma\left(\frac{r}{2}\right)}, & 1\le r\le K-1,\\
		\frac{\sigma^K\Gamma\left(i+j+\frac{K}{2}\right)}{(\sigma^2+2)^{i+j+K/2}\Gamma\left(\frac{K}{2}\right)}, & r = K.
	\end{array}
	\right.
	\end{align}
\end{thm}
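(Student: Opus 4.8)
The plan is to invoke the orthogonality principle for the conditional-mean estimator. Writing $\hat g(\bvec Y)$ for the estimator in Eq.~\eqref{eq:mmse-estimator-expression}, orthogonality gives
\begin{align}
\texttt{mmse} = \expect{\norm{\bvec{X}}^2} - \expect{\hat g(\bvec Y)^2} = n - \expect{\hat g(\bvec Y)^2},\notag
\end{align}
since $\bvec X\sim\mathcal N(\bvec 0,\bvec I_n)$ makes $\expect{\norm{\bvec{X}}^2}=n$. Thus everything reduces to $\expect{\hat g(\bvec Y)^2}$. I would square the series representation~\eqref{eq:mmse-estimator-expression}, which turns $\hat g(\bvec Y)^2$ into a double sum over \emph{ordered} pairs of $K$-subsets $S,S'$ and over the series indices $i,j$, leaving the single quantity
\begin{align}
M_{ij}(S,S') := \expect{e^{-Z_S-Z_{S'}}Z_S^{i}Z_{S'}^{j}},\qquad Z_S=\tfrac{\norm{\bvec Y_S}^2}{2\sigma^2(\sigma^2+1)},\notag
\end{align}
to be evaluated, with the overall constant $2(\sigma^2/(\sigma^2+1))^{J+1}/\binom{n}{K}^{2}$ pulled out front.

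The heart of the proof is the evaluation of $M_{ij}(S,S')$, and I would argue that it depends on the pair only through the overlap $r=\abs{S\cap S'}$. Set $\beta=1/(2\sigma^2(\sigma^2+1))$ and partition the coordinates touched by the pair into the shared block $A=S\cap S'$ (size $r$) and the two exclusive blocks $B_1=S\setminus S'$, $B_2=S'\setminus S$ (each of size $K-r$); coordinates outside $S\cup S'$ do not enter $Z_S$ or $Z_{S'}$. With $P_A,P_{B_1},P_{B_2}$ the corresponding sums of squared entries of $\bvec Y$, one has $Z_S=\beta(P_A+P_{B_1})$ and $Z_{S'}=\beta(P_A+P_{B_2})$, so that $e^{-Z_S-Z_{S'}}=e^{-2\beta P_A}e^{-\beta P_{B_1}}e^{-\beta P_{B_2}}$; note that the exponent on the shared block is \emph{doubled}. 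Binomially expanding $(P_A+P_{B_1})^i$ and $(P_A+P_{B_2})^j$ produces the sums over $s,t$ and the factors $\binom is\binom jt$ appearing in $h_{ij}(r)$, with $P_A$ carrying the total power $i+j-s-t$.

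Because the three blocks involve disjoint coordinates they are independent, so $M_{ij}$ factorizes into a product of one-block moments of the form $\expect{P^{q}e^{-\alpha P}}$, where each $P$ is a (scaled) chi-square. I would evaluate these via the gamma identity $\expect{W^{q}e^{-\alpha W}}=\frac{\Gamma(\kappa+q)}{\Gamma(\kappa)}\rho^{\kappa}/(\rho+\alpha)^{\kappa+q}$ for $W\sim\mathrm{Gamma}(\kappa,\rho)$, with $\kappa$ equal to half the block size and the retained entries carrying variance $\sigma^2+1$. The single tilt $\alpha=\beta$ on $B_1,B_2$ yields the $(\sigma^2+1)$ denominators and the $\Gamma(s+\tfrac{K-r}2),\Gamma(t+\tfrac{K-r}2)$ factors, whereas the doubled tilt $\alpha=2\beta$ on $A$ produces the $(\sigma^2+2)$ denominator and $\Gamma(i+j-(s+t)+\tfrac r2)$. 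When $r=0$ the shared block is absent and $Z_S,Z_{S'}$ decouple, while when $r=K$ the exclusive blocks vanish and $Z_S=Z_{S'}$; these degenerate cases collapse the corresponding gamma factors and reproduce exactly the first and third branches of~\eqref{eq:h_ij-definition}, with $1\le r\le K-1$ giving the middle branch. This establishes $M_{ij}(S,S')=h_{ij}(r)$.

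Finally I would carry out the combinatorial bookkeeping. For a fixed $S$ the number of $S'$ with $\abs{S\cap S'}=r$ is $\binom Kr\binom{n-K}{K-r}$, which is nonzero precisely when $\max\{0,2K-n\}\le r\le K$ (so that $\abs{S\cup S'}=2K-r\le n$); summing the $r$-dependent value $h_{ij}(r)$ over all $\binom{n}{K}^{2}$ ordered pairs therefore contributes a factor $\binom nK$, cancelling one power of $\binom nK$ in the prefactor and yielding Eq.~\eqref{eq:mmse-expression}. I expect the main obstacle to be twofold: identifying the correct law under which the block moments are taken — the retained entries contribute variance $\sigma^2+1$, and one must confirm that this is consistent after accounting for the missingness pattern $\bvec B$ — and then evaluating the shared-block moment with the doubled tilt $2\beta$ while keeping the bookkeeping that couples the binomial indices $(s,t)$ to the overlap $r$ and the block sizes. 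A secondary point to verify is the legitimacy of interchanging the infinite summation over the series index (hence over $i,j$) with the expectation, which rests on absolute convergence of the $a_l$-series.
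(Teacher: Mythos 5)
Your proposal follows essentially the same route as the paper's own proof: the identity $\texttt{mmse} = n - \expect{\left(\expect{\norm{\bvec{X}}\mid\bvec{Y}}\right)^2}$, squaring the series representation into a double sum over ordered pairs of $K$-subsets and indices $i,j$, reducing the pair expectation to a function of the overlap $r=\abs{S\cap T}$ via the block decomposition into $S\cap T$, $S\setminus T$, $T\setminus S$ (the paper's $R,P,Q$) with binomial expansion over $s,t$, independent chi-square blocks evaluated by the exponential-tilt gamma moment identity (including the doubled tilt on the shared block giving the $\sigma^2+2$ factors), and the same pair count $\binom{K}{r}\binom{n-K}{K-r}$ with the $r_{\min}=\max\{0,2K-n\}$ constraint cancelling one power of $\binom{n}{K}$. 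This matches the paper's argument step for step, so the proposal is correct and takes the same approach.
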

\begin{proof}
	The proof is deferred to Section~\ref{sec:proof-thm-mmse-expression}.
\end{proof}
\begin{rmk}
	If $K=n$, $J=0$ and $r_{\min}=n$, from Eq.~\eqref{eq:mmse-expression}, $\texttt{mmse}=n-\frac{2\sigma^2}{(\sigma^2+1)}\sum_{i,j\ge 0}a_ia_jh_{ij}(n)$. Using the expression of $a_i$ from Eq.~\eqref{eq:al-definition} and $h_{ij}(K)$ from Eq.~\eqref{eq:h_ij-definition}, it can be verified that we obtain the same expression as that of Theorem 2 of~\cite{dytso2019estimating}. This establishes the result of Theorem~\ref{thm:mmse-expression} as a generalization of Theorem 2 of~\cite{dytso2019estimating}.
\end{rmk}
\begin{figure}[t!]
	\centering
	\includegraphics[height=2in,width=3.5in]{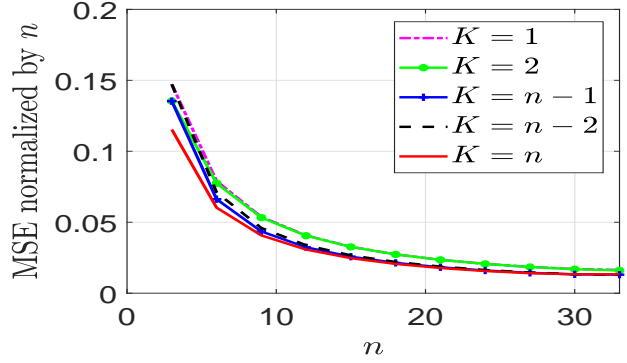}
	\caption{Comparison of MSE normalized by $n$ for different values of $K$ with $\sigma=1$.}
	\label{fig:mmse-vs-n}
	
\end{figure}
The MSE for the estimators for different values of $K,n$ are plotted in Fig.~\ref{fig:mmse-vs-n} where it is observed that even with small $K$ the \texttt{mmse} is pretty close to the \texttt{mmse} with $K=n$ which again.

We conclude this section with some results on the asymptotic behavior of the MMSE: 
\begin{thm}
	\label{thm:asymptotic-mmse}
	Let $\bvec{X},\bvec{Y},\bvec{B},\bvec{N}$ be described as in model~\eqref{eq:measurement-model} and $\prob{\supp{\bvec{B}}=S} = 1/\binom{n}{K}$ for any $S\subset [n]$ with $\abs{S}=K$ and $J=n-K$.  Then, \begin{enumerate}
		\item For fixed $K,n$, $\lim_{\sigma\to 0}\texttt{mmse} = n - (n+J)\frac{B\left(\frac{n+J}{2},\frac{n+1}{2}\right)}{B\left(\frac{n+J+1}{2},\frac{n}{2}\right)}\ _3F_2\left(\frac{n+1}{2},\frac{J}{2},-\frac{1}{2};\frac{n+J+1}{2},\frac{n}{2};1\right)$.
		\item For fixed $K,n$, $\lim_{\sigma\to \infty}\texttt{mmse} = n - 2\left(\frac{\Gamma\left(\frac{n+1}{2}\right)}{\Gamma\left(\frac{n}{2}\right)}\right)^2$.
		\item For fixed $\sigma>0$ and $K,n\to \infty$ with $K/n=p\in [1/n,1]$, $\lim_{n\to \infty}\frac{\texttt{mmse}}{n}=0$.
	\end{enumerate}
\end{thm}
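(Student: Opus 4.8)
The plan is to route all three limits through the orthogonality decomposition
\[
\texttt{mmse} = \expect{\norm{\bvec X}^2} - \expect{g_\sigma(\bvec Y)^2} = n - \expect{g_\sigma(\bvec Y)^2},
\]
where $g_\sigma(\bvec y):=\expect{\norm{\bvec X}\mid\bvec Y=\bvec y}$ is the estimator of Theorem~\ref{thm:mmse-estimator} and $\expect{\norm{\bvec X}^2}=n$ because $\norm{\bvec X}^2\sim\chi^2_n$. I would treat the three parts in order of increasing difficulty. Part (3) is immediate and in fact needs neither the explicit series nor the hypothesis $K/n=p$: since the prior mean $c=\expect{\norm{\bvec X}}$ is a valid (suboptimal) estimator, the MMSE is squeezed by
\[
0\le\texttt{mmse}\le\operatorname{Var}(\norm{\bvec X}) = n-2\left(\frac{\Gamma\left(\frac{n+1}{2}\right)}{\Gamma\left(\frac n2\right)}\right)^2,
\]
where I used that $\norm{\bvec X}$ is $\chi$-distributed with $n$ degrees of freedom and mean $\sqrt2\,\Gamma(\frac{n+1}2)/\Gamma(\frac n2)$. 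Since $\Gamma(\frac{n+1}2)/\Gamma(\frac n2)=\sqrt{n/2}\,(1+O(1/n))$, the upper bound stays bounded (indeed it tends to $\tfrac12$), so dividing by $n$ forces $\texttt{mmse}/n\to0$.

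For Part (2) the limiting estimator is the prior mean itself. Heuristically, as $\sigma\to\infty$ the informative part $\bvec{B}\bvec X/\sigma$ of $\bvec Y/\sigma$ vanishes, so $\bvec Y$ decouples from $\bvec X$ and $g_\sigma(\bvec Y)\to\expect{\norm{\bvec X}}$, giving $\expect{g_\sigma(\bvec Y)^2}\to(\expect{\norm{\bvec X}})^2=2(\Gamma(\frac{n+1}2)/\Gamma(\frac n2))^2$ --- the prior variance, consistent with Part (3). To make this rigorous from the closed form I would pass to the limit inside \eqref{eq:mmse-expression}: the prefactor $(\sigma^2/(\sigma^2+1))^{J+1}\to1$, only the $k=0$ term of \eqref{eq:al-definition} survives so that $a_l\to\frac1{l!}\Gamma(\frac{n+1}2+l)/\Gamma(\frac n2+l)$, and a term-by-term inspection of \eqref{eq:h_ij-definition} (matching powers of $\sigma$ in each of the three cases) shows $h_{ij}(r)\to\mathbf 1\{i=j=0\}$ for every admissible $r$. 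The surviving double sum is $\frac{2}{\binom nK}a_0^2\sum_r\binom Kr\binom{n-K}{K-r}$, and the Vandermonde identity $\sum_r\binom Kr\binom{n-K}{K-r}=\binom nK$ collapses it to $2a_0^2=2(\Gamma(\frac{n+1}2)/\Gamma(\frac n2))^2$, proving the claim. Dominated convergence over the index $(i,j)$ justifies the interchange.

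Part (1) carries the real work. As $\sigma\to0$ the noise disappears and, conditioned on $\supp{\bvec b}=S$, the data satisfy $\bvec y_S\to\bvec x_S$ and $\bvec y_{S^c}\to\bvec 0$; since $\bvec x_S$ a.s. has no vanishing entry, $S$ (hence $\bvec B$) is recoverable from $\bvec Y$ and the estimator degenerates to the clairvoyant one $g_0(\bvec y)=\expectsuff{W}{\sqrt{\norm{\bvec y_S}^2+W}}$ with $W\sim\chi^2_J$ (the same limit follows from \eqref{eq:mmse-estimator-expectation-expression-given-B}, since $\frac{\sigma^2}{\sigma^2+1}U_1\to\norm{\bvec y_S}^2$ as the noncentrality blows up). By the symmetry of the problem over the choice of $S$, the limiting MSE is independent of $S$, so
\[
\lim_{\sigma\to0}\texttt{mmse}=n-\expectsuff{R}{\phi(R)^2},\qquad \phi(R):=\expectsuff{W}{\sqrt{R+W}},
\]
with $R\sim\chi^2_K$ and $W\sim\chi^2_J$ independent. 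I would justify the interchange $\lim_\sigma\expect{g_\sigma(\bvec Y)^2}=\expect{g_0(\bvec Y)^2}$ by uniform integrability, using the $\sigma$-free bound $\expect{g_\sigma(\bvec Y)^2}\le\expect{\norm{\bvec X}^2}=n$ from conditional Jensen.

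It then remains to evaluate $E:=\expectsuff{R}{\phi(R)^2}$ and identify it with the stated hypergeometric expression. The cleanest description is the Gaussian one: writing $R=\norm{\bvec X_S}^2$ and $W_i=\norm{\bvec Z_i}^2$ with $\bvec Z_1,\bvec Z_2\sim\mathcal N(\bvec 0,\bvec I_J)$ independent, one has $E=\expect{\norm{\bvec V_1}\,\norm{\bvec V_2}}$ where $\bvec V_1=(\bvec X_S,\bvec Z_1)$ and $\bvec V_2=(\bvec X_S,\bvec Z_2)$ are two standard Gaussian vectors in $\real^n$ sharing their first $K$ coordinates. Equivalently, using the chi-square moment identity $\phi(R)=\sqrt2\,\frac{\Gamma\left(\frac{J+1}{2}\right)}{\Gamma\left(\frac J2\right)}\,{}_1F_1\!\left(-\tfrac12;\tfrac J2;-\tfrac R2\right)$, the computation reduces to the Laplace-type integral $\expectsuff{R}{{}_1F_1(-\tfrac12;\tfrac J2;-\tfrac R2)^2}$ against the $\chi^2_K$ density. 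Expanding one of the two factors (or one square root) as a power series, integrating the resulting $\chi^2_K$ moments term by term, and summing the inner index by a Gauss/Vandermonde evaluation should collapse the double series to a single ${}_3F_2\!\left(\frac{n+1}2,\frac J2,-\frac12;\frac{n+J+1}2,\frac n2;1\right)$ at unit argument, with the beta-function ratio $B(\frac{n+J}2,\frac{n+1}2)/B(\frac{n+J+1}2,\frac n2)$ and the factor $(n+J)$ emerging as the normalizing constants of the $R$-integration. This special-function manipulation --- taming the two coupled square roots into one ${}_3F_2$ and verifying every gamma factor --- is the main obstacle; by contrast the two limit-interchanges (Parts 1 and 2) are routine once uniform integrability is invoked, and Part (3) needs only the variance bound.
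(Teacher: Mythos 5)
Your overall framework is sound and two of the three parts go through; in fact your Part (3) is \emph{better} than the paper's own argument. The paper proves Part (3) by a lengthy chain of Stirling approximations, a quadratic Taylor lower bound for $\sqrt{1+x}$, and Jensen's inequality applied to the series form of the estimator, whereas your two-line bound $\texttt{mmse}\le\operatorname{Var}\left(\norm{\bvec{X}}\right)=n-2\left(\Gamma\left(\tfrac{n+1}{2}\right)/\Gamma\left(\tfrac{n}{2}\right)\right)^2\to\tfrac12$ is rigorous, holds for every $K$ and $\sigma$, and makes the hypothesis $K/n=p$ unnecessary. Your Part (2) takes a different route from the paper --- the paper couples all $\sigma$ on one probability space through the decomposition $Z^2=\norm{\bvec{X}_2}^2+\norm{\tfrac{\bvec{X}_4}{\sqrt{\sigma^2+1}}+\tfrac{\sigma}{\sqrt{\sigma^2+1}}\bvec{X}_3}^2$ and applies dominated convergence --- but your term-by-term limit in \eqref{eq:mmse-expression} plus the Vandermonde collapse is workable, provided you exhibit the dominating summable bound over $(i,j)$ that you promise. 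And your reduction of Part (1) to $E=\expectsuff{R}{\phi(R)^2}$ with $\phi(R)=\expectsuff{W}{\sqrt{R+W}}$, $R\sim\chi^2_K$, $W\sim\chi^2_J$, is exactly the paper's quantity $\expectsuff{V_2}{(\expectsuff{V_1}{\sqrt{V_1+V_2}})^2}$; your support-recovery remark even patches a step the paper glosses over, namely why conditioning on $S$ is harmless as $\sigma\to0$.

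The genuine gap is in Part (1): the evaluation of $E$ into the stated ${}_3F_2$ --- which is the entire substance of that part --- is never performed, and both routes you sketch for it fail. The ``chi-square moment identity'' $\phi(R)=\sqrt2\,\frac{\Gamma\left(\frac{J+1}{2}\right)}{\Gamma\left(\frac{J}{2}\right)}\,{}_1F_1\left(-\tfrac12;\tfrac{J}{2};-\tfrac{R}{2}\right)$ is false: that formula is the half-moment of a \emph{noncentral} chi-square $\chi^{\prime 2}_J(R)$, while your $R+W$ is a \emph{central} chi-square shifted by a constant --- same mean but variance $2J$ instead of $2J+4R$ --- and expanding both sides to first order in $R$ shows the linear coefficients disagree by the factor $(J-1)/J$. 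The alternative sketch (expand one square root as a power series and integrate term by term) hits a convergence wall: the binomial series of $\sqrt{R+W}$ in either variable has finite radius of convergence while the chi-square variables are unbounded, so the termwise moments $\expect{W^k}\sim 2^k k!$ produce a divergent series. This is precisely the obstruction the paper's proof is designed to beat: it writes $E$ (your Gaussian description, reduced by symmetry to three variables) as $\int_0^\infty\!\!\int_0^\infty\!\!\int_0^\infty\sqrt{(x^2+y^2)(z^2+y^2)}\,e^{-(x^2+y^2+z^2)}(xz)^{J-1}y^{K-1}\,dx\,dy\,dz$ and passes to spherical coordinates $x=r\cos a$, $y=r\sin a\cos b$, $z=r\sin a\sin b$, under which the awkward factor becomes $r^2\sin a\sqrt{1-\sin^2a\sin^2b}$; since $\sin^2a\sin^2b\le 1$ on the whole domain, the binomial series now converges everywhere, the angular integrals are beta functions, and the single remaining sum is the ${}_3F_2$ at unit argument. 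Without this change of variables (or an equivalent device), your Part (1) stops exactly where the theorem's formula still has to be produced. A minor further point: uniform integrability of $g_\sigma(\bvec{Y})^2$ needs more than the second-moment bound $\expect{g_\sigma(\bvec{Y})^2}\le n$; use conditional Jensen at fourth order, $\expect{g_\sigma(\bvec{Y})^4}\le\expect{\norm{\bvec{X}}^4}=n(n+2)$.
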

\begin{proof}
	The proof is deferred to Section~\ref{sec:proof-thm-asymptotic-mmse}.
\end{proof}
\begin{rmk}
	The first result shows that when $\sigma\to 0$, there is a nonzero \texttt{mmse}  arising entirely due to the randomness associated to partial entries of $\bvec{X}$ due to data missing. Note that if $J=0$ so that no entries of $\bvec{X}$ is missing, the expression for $\texttt{mmse}$ reduces to $0$ when $\sigma\to 0$, which is intuitively satisfying. The second result shows that when $\sigma\to \infty$, the \texttt{mmse} is identical to the second result of Theorem 3 of~\cite{dytso2019estimating}. The third result of Theorem~\ref{thm:asymptotic-mmse} shows that as long as $K$ is such that although it is large but the ratio of $K/n$ is between $1/n$ to $1$, the ratio $\texttt{mmse}/n$ tends to $0$, which is interesting since it shows that even with small ratio $K/n$, the estimator yields near similar performance as the one which uses measurements of all coordinates. 
\end{rmk}
%
%
%
%
\section{Proof of Theorems}
\label{sec:proofs}
\subsection{Proof of Theorem~\ref{thm:mmse-estimator}}
\label{sec:proof-thm-mmse-estimator}
%
 Note that $\expect{\norm{\bvec{X}}\mid \bvec{Y}=\bvec{y},\ \bvec{B}=\bvec{b}} = \frac{\int_0^\infty tf_{\norm{\bvec{X}},\bvec{Y}\mid \bvec{B}}\left(t,\bvec{y}\mid \bvec{b}\right)dt}{f_{\bvec{Y}\mid \bvec{B}}(\bvec{y}\mid \bvec{b})}$. Now,
\begin{align}
f_{\norm{\bvec{X}},\bvec{Y}\mid \bvec{B}}\left(t,\bvec{y}\mid \bvec{b}\right)
\ & = \int_{\real^N} f_{\norm{\bvec{X}},\bvec{Y}\mid \bvec{B},\bvec{X}}\left(t,\bvec{y}\mid \bvec{b},\bvec{x}\right)f_{\bvec{X}}(\bvec{x}) d\bvec{x} = \int_{\real^N} f_{\bvec{Y}\mid \bvec{B},\bvec{X}}\left(\bvec{y}\mid \bvec{b},\bvec{x}\right)\delta\left(\norm{\bvec{x}}-t\right)f_{\bvec{X}}(\bvec{x}) d\bvec{x},
\end{align} 
which follows from the observation that given $\bvec{X}$, $\norm{\bvec{X}}$ is fixed, so that it is independent of $\bvec{Y}$. Now, given $\bvec{B}$ such that $\supp{\bvec{B}}=S$ with $\abs{S}=K,\ J=n-K$, we get $\bvec{Y}_{S^C}\mid \bvec{X}=\bvec{N}_{S^C}$ and $\bvec{Y}_S\mid \bvec{X}=\bvec{X}_S+\bvec{N}_S$, so that $\bvec{Y}_S\mid \bvec{X}\sim \mathcal{N}(\bvec{X}_S,\sigma^2 \bvec{I}_S)$, $\bvec{Y}_{S^C}\mid \bvec{X}\sim \mathcal{N}(\bvec{0}_{S^C},\sigma^2\bvec{I}_{S^C})$ and $\bvec{Y}_S\sim \mathcal{N}(\bvec{0}_S,(\sigma^2+1) \bvec{I}_S),\ \bvec{Y}_{S^C}\sim \mathcal{N}(\bvec{0}_{S^C},\sigma^2\bvec{I}_{S^C})$. Consequently, \begin{align}
f_{\norm{\bvec{X}},\bvec{Y}\mid \bvec{B}}\left(t,\bvec{y}\mid \bvec{b}\right)
\ & =\frac{e^{-\frac{\norm{\bvec{y}_{S^C}}^2}{2\sigma^2}}}{(2\pi\sigma)^{n}}\int_{\real^n}e^{-\left(\frac{\norm{\bvec{y}_S-\bvec{x}_S}^2}{2\sigma^2}+\frac{\norm{\bvec{x}}^2}{2}\right)}\delta\left(\norm{\bvec{x}}-t\right) d\bvec{x}\nonumber\\
\ & = \frac{e^{-\left(\frac{\norm{\bvec{y}_{S^C}}^2}{2\sigma^2}+\frac{\norm{\bvec{y}_{S}}^2}{2(\sigma^2+1)}\right)}}{(2\pi(\sigma^2+1))^{K/2}((2\pi\sigma^2)^{J/2}} \int_{\real^N}\frac{e^{-\left(\frac{\norm{\bvec{x}_S-\frac{\bvec{y}_S}{\sigma^2+1}}^2}{\frac{2\sigma^2}{\sigma^2+1}}+\frac{\norm{\bvec{x}_{S^C}}^2}{2}\right)}}{\left(2\pi\left(\frac{\sigma^2}{\sigma^2+1}\right)\right)^{K/2}(2\pi)^{J/2}}\delta(\norm{\bvec{x}}-t)d\bvec{x}\nonumber\\
\ & =f_{\bvec{Y}\mid \bvec{B}}(\bvec{y}\mid \bvec{b})f_{Z}(t),
\end{align}   
where the penultimate step above uses the following identity: $\displaystyle \frac{\norm{\bvec{y}_S-\bvec{x}_S}^2}{2\sigma^2}+\frac{\norm{\bvec{x}}^2}{2} =\frac{\norm{\bvec{x}_{S^C}}^2}{2}+ \frac{\sigma^2+1}{2\sigma^2}\norm{\bvec{x}_S-\frac{\bvec{y}_S}{\sigma^2+1}}^2+\frac{\norm{\bvec{y}_S}^2}{2(\sigma^2+1)},$
and where we define the random variable $Z=\norm{\widetilde{\bvec{X}}}$, $\widetilde{\bvec{X}}\sim \mathcal{N}\left(\bvec{\mu},\bvec{\Sigma}\right)$, with \begin{align}
\bvec{\mu} = \begin{bmatrix}
\frac{\bvec{y}_S}{\sigma^2+1} \\
\bvec{0}_{S^C}
\end{bmatrix},\ 
\bvec{\Sigma} & = \begin{bmatrix}
\frac{\sigma^2}{\sigma^2+1}\bvec{I}_S & \bvec{O}_{S^C}\\
\bvec{O}_S & \bvec{I}_{S^C}
\end{bmatrix}.
\end{align}
Therefore, 
\begin{align}
\label{eq:mmse-expression-conditional-simple-expectation}
\expect{\norm{\bvec{X}}\mid \bvec{Y}=\bvec{y},\ \bvec{B}=\bvec{b}} & = \int_0^\infty t f_{Z}(t)dt=\expect{Z}.
\end{align}
Note that one can write $Z^2=\frac{\sigma^2}{\sigma^2+1}\norm{\bvec{X}_1}^2+\norm{\bvec{X}_2}^2$, where $\bvec{X}_1\sim\mathcal{N}\left(\frac{\bvec{y}_S}{\sqrt{\sigma^2(\sigma^2+1)}},\bvec{I}_S\right)$ and $\bvec{X}_2\sim\mathcal{N}(\bvec{0}_{S^C},\bvec{I}_{S^C})$. Therefore, $Z^2=\frac{\sigma^2}{\sigma^2+1}U_1+U_2$ with the descriptions of $U_1,U_2$ provided in Theorem~\ref{thm:mmse-estimator} and consequently Eq.~\eqref{eq:mmse-estimator-expectation-expression-given-B} follows. 

To further simplify Eq.~\eqref{eq:mmse-estimator-expectation-expression-given-B}, writing $V=Z^2$, one obtains, for all $v\ge 0$, \begin{align}
\label{eq:convolution-expression-prelim}
f_{V}(v) & = f_{\frac{\sigma^2}{\sigma^2+1}U_1}(v)\circledast f_{U_2}(v)
=\frac{\sigma^2+1}{\sigma^2}\int_0^v f_{U_1}\left(\frac{(\sigma^2+1)u}{\sigma^2}\right)f_{U_2}(v-u)du.
\end{align}
Now we recall that for all $u\ge 0$, \begin{align}
\label{eq:noncentral-chi-square-pdf}
f_{U_1}(\lambda,K;u) & = \frac{e^{-\lambda/2}{}_0F_1(;K/2;\lambda u/4)e^{-u/2}u^{K/2-1}}{2^{K/2}\Gamma(K/2)},\\
\label{eq:central-chi-square-pdf}
f_{U_2}(J;u) & = \frac{e^{-u/2}u^{J/2-1}}{\Gamma(J/2)2^{J/2}}.
\end{align}
Therefore, from~\eqref{eq:convolution-expression-prelim}, using the expressions from Eqs.~\eqref{eq:noncentral-chi-square-pdf} and~\eqref{eq:central-chi-square-pdf}, and using a substitution of variable inside the integration as $u\to \rho$ with $u=\rho v$, as well as using the identity $\lambda(\sigma^2 + 1)/(4\sigma^2)=\norm{\bvec{y}_S}^2/(4\sigma^4)$, it follows after some straightforward simplifications, that \begin{align}
f_V(v)  & =\frac{\left(\frac{\sigma^2+1}{\sigma^2}\right)^{K/2}e^{-(\lambda+v)/2}v^{n/2-1}}{2^{n/2}\Gamma(J/2)\Gamma(K/2)} \int_0^1 e^{-\frac{\rho v}{2\sigma^2}}{\rho}^{K/2-1}(1-\rho)^{J/2-1}{}_0F_1\left(;K/2;\frac{\norm{\bvec{y}_S}^2v\rho}{4\sigma^4}\right)d\rho\nonumber\\
\ & = \frac{\left(\frac{\sigma^2+1}{\sigma^2}\right)^{K/2}e^{-(\lambda+v)/2}v^{n/2-1}}{2^{n/2}\Gamma(J/2)\Gamma(K/2)} R\left(\frac{v}{2\sigma^2},K/2,J/2,K/2,\frac{\norm{\bvec{y}_S}^2v\rho}{4\sigma^4}\right),
\end{align}
where we define, for $\alpha,\beta,\gamma,\nu,\epsilon\ge 0$, \begin{align}
\label{eq:r-function}
R(\alpha,\beta,\gamma,\nu,\epsilon) & = \int_0^1 e^{-\alpha x}x^{\beta-1}(1-x)^{\gamma-1}{}_0F_1(;\nu;\epsilon x)dx.
\end{align}
In~\ref{sec:appendix-evaluation-of-r-function} the following is established: \begin{align}
\label{eq:r-function-evaluation}
R(\alpha,\beta,\gamma,\beta,\epsilon) & = e^{-\alpha}B(\beta,\gamma)\sum_{k,l\ge 0}\frac{\alpha^k\epsilon^l(\gamma)_k}{k!l!(\beta+\gamma)_{k+l}},
\end{align}
Therefore, one obtains, \begin{align}
f_V(v) &  =\frac{\left(\frac{\sigma^2+1}{\sigma^2}\right)^{K/2}e^{-(\lambda+v)/2}v^{n/2-1}}{2^{n/2}\Gamma(J/2)\Gamma(K/2)} e^{-\frac{v}{2\sigma^2}}B(K/2,J/2)\sum_{k,l\ge 0}\frac{\left(\frac{v}{2\sigma^2}\right)^k\left(\frac{\norm{\bvec{y}_S}^2v}{4\sigma^4}\right)^l(J/2)_k}{(n/2)_{k+l}}\nonumber\\
\ & = \frac{\left(\frac{\sigma^2+1}{\sigma^2}\right)^{K/2}e^{-(\lambda+v(1+1/\sigma^2))/2}v^{n/2-1}}{2^{n/2}\Gamma(n/2)}\sum_{k,l\ge 0}\frac{v^{k+l}\norm{\bvec{y}_S}^{2l}}{k!l!(2\sigma^2)^{k+2l}}\frac{(J/2)_k}{(n/2)_{k+l}}.
\end{align}

Consequently, \begin{align}
\lefteqn{\expect{\norm{\bvec{X}}\mid \bvec{Y}=\bvec{y},\ \bvec{B}=\bvec{b}} = \int_0^\infty t\cdot 2tf_V(t^2) dt} & &\nonumber\\
\ & = \int_0^\infty 2t^2 \frac{\left(\frac{\sigma^2+1}{\sigma^2}\right)^{K/2}e^{-(\lambda+t^2(1+1/\sigma^2))/2}t^{n-2}}{2^{n/2}\Gamma(n/2)} \sum_{k,l\ge 0}\frac{t^{2(k+l)}\norm{\bvec{y}_S}^{2l}}{k!l!(2\sigma^2)^{k+2l}}\frac{(J/2)_k}{(n/2)_{k+l}} dt\nonumber\\
\ & = \sum_{k,l\ge 0}\frac{\left(\frac{\sigma^2+1}{\sigma^2}\right)^{K/2}e^{-\lambda/2}}{2^{n/2-1}\Gamma(n/2)}\cdot \frac{\norm{\bvec{y}_S}^{2l}}{k!l!(2\sigma^2)^{k+2l}}\cdot \frac{(J/2)_k}{(n/2)_{k+l}}\int_0^\infty t^{n+2(k+l)}e^{-t^2(1+1/\sigma^2)/2}dt.
\end{align}
For any $a,b>0$, it is straightforward to deduce that,$\int_0^\infty t^ae^{-bt^2}dt = \frac{\Gamma(\frac{a+1}{2})}{2b^{(a+1)/2}}$, so that, \begin{align}
\expect{\norm{\bvec{X}}\mid \bvec{Y}=\bvec{y},\ \bvec{B}=\bvec{b}} & =\frac{\left(\frac{\sigma^2+1}{\sigma^2}\right)^{K/2}e^{-\lambda/2}}{2^{n/2-1}\Gamma(n/2)}\sum_{k,l\ge 0}\frac{\norm{\bvec{y}_S}^{2l}}{k!l!(2\sigma^2)^{k+2l}} \frac{(J/2)_k}{(n/2)_{k+l}} \frac{1}{2}\left(\frac{2\sigma^2}{\sigma^2+1}\right)^{\frac{n+1}{2}+k+l}\Gamma\left(\frac{n+1}{2}+k+l\right)\nonumber\\
\ & = \sqrt{2}\left(\frac{\sigma^2}{\sigma^2+1}\right)^{\frac{J+1}{2}}e^{-\lambda/2} \sum_{k,l\ge 0}\frac{\norm{\bvec{y}_S}^{2l}}{(2\sigma^2)^l(\sigma^2+1)^{k+l} k!l!} \frac{\left(\frac{J}{2}\right)_k\Gamma(\frac{n+1}{2}+k+l)}{\Gamma\left(\frac{n}{2}+k+l\right)},
\end{align}
which results in Eq.~\eqref{eq:mmse-estimator-series-expression-given-B} after plugging in the expressions of $a_l,\ l\ge 0$ from Eq.~\eqref{eq:al-definition}. Thereafter, taking expectation over the realizations of $\bvec{B}$, the Eqs.~\eqref{eq:mmse-estimator-expression} and~\eqref{eq:mmse-estimator-alternate-expression-as-expectation} follow. 
\subsection{Proof of Theorem~\ref{thm:mmse-expression}}
\label{sec:proof-thm-mmse-expression}
The expression for MSE of the estimator $\expect{\norm{\bvec{X}}\mid \bvec{Y}=\bvec{y}}$ is \begin{align}
\label{eq:mmse-basic-expression}
\texttt{mmse} & = \expect{\norm{\bvec{X}}^2}-\expect{\left(\expect{\norm{\bvec{X}}\mid \bvec{Y}=\bvec{y}}\right)^2} = n - \expect{\left(\expect{\norm{\bvec{X}}\mid \bvec{Y}=\bvec{y}}\right)^2}.
\end{align}
Using Eq.~\eqref{eq:mmse-estimator-alternate-expression-as-expectation}, it follows that \begin{align}
\expect{\left(\expect{\norm{\bvec{X}}\mid \bvec{Y}=\bvec{y}}\right)^2}
\ & = 2\left(\frac{\sigma^2}{\sigma^2+1}\right)^{J+1}\sum_{i,j=0}^{\infty}a_ia_j\expectsuff{\bvec{Y}}{\expectsuff{S}{Z_S^ie^{-Z_S}\mid\bvec{Y}}\expectsuff{T}{Z_T^je^{-Z_T}\mid \bvec{Y}}}\nonumber\\
\label{eq:mmse-second-term-preliminary-expression}
\ & = \frac{2\left(\frac{\sigma^2}{\sigma^2+1}\right)^{J+1}}{\binom{n}{K}^2}\sum_{i,j=0}^{\infty}a_ia_j\sum_{\substack{S,T\subset [n]:\\\abs{S}=\abs{T}=K}}\expectsuff{\bvec{Y}}{Z_S^iZ_T^je^{-(Z_S+Z_T)}}.
\end{align}
For any two subsets $S,T\subset [n]$ such that $\abs{S}=\abs{T}=K$, let $P=S\setminus T,\ Q=T\setminus S,\ R=S\cap T$. If $\abs{R}=r$,$\ \abs{P}=\abs{Q}=K-r$, and $Z_S=Z_P+Z_R,\ Z_T=Z_Q+Z_R$. Since $Z_S=\frac{\norm{\bvec{Y}_S}^2}{2\sigma^2(\sigma^2+1)}$ and $\bvec{Y}_S\sim \mathcal{N}\left(\bvec{0}_S,(\sigma^2+1)\bvec{I}_S\right)$, it follows that if $1\le r\le K-1$, $Z_P,\ Z_Q,\ Z_R$ are all independent and $2\sigma^2Z_P\sim \chi^2_{K-r},2\sigma^2Z_Q\sim \chi^2_{K-r},2\sigma^2Z_R\sim \chi^2_{r}$. Similarly, if $r=K$, $Z_P=Z_Q=0$ and $2\sigma^2Z_R\sim \chi^2_{K}$, and if $r=0$, $Z_R=0$, and $Z_P,\ Z_Q$ are independent and identically distributed with $2\sigma^2Z_P\sim \chi^2_{K}$.
%
Let $h_{ij}(r)$ denote $\expect{Z_S^iZ_T^je^{-(Z_S+Z_T)}}$ when $\abs{S\cap T}=r,\ 0\le r\le K$.
Then we find, \begin{align}
\ & h_{ij}(0) = \expect{Z_P^{i}e^{-Z_P}}\expect{Z_Q^{j}e^{-Z_Q}}
= \frac{\sigma^{2K}\Gamma\left(i+\frac{K}{2}\right)\Gamma\left(j+\frac{K}{2}\right)}{(\sigma^2+1)^{K+i+j}\left(\Gamma\left(\frac{K}{2}\right)\right)^2},\\
\ & h_{ij}(K) =\expect{Z_R^{i+j}e^{-2Z_R}} = \frac{\sigma^K\Gamma\left(i+j+\frac{K}{2}\right)}{(\sigma^2+2)^{i+j+K/2}\Gamma\left(\frac{K}{2}\right)},\\
\ & h_{ij}(r) = \expect{(Z_P+Z_R)^i(Z_Q+Z_R)^je^{-(Z_P+Z_Q+2Z_R)}}\ (1\le r\le K-1)\nonumber\\
\ & = \sum_{\substack{0\le s\le i,\\ 0\le t\le j}} \binom{i}{s}\binom{j}{t}\expect{Z_P^se^{-Z_P}}\expect{Z_Q^te^{-Z_Q}} \expect{Z_R^{i+j-(s+t)}e^{-2Z_R}}\nonumber\\
\ & =\sum_{\substack{0\le s\le i,\\ 0\le t\le j}} \binom{i}{s}\binom{j}{t} \frac{\sigma^{2K-r}\Gamma\left(s+\frac{K-r}{2}\right)\Gamma\left(t+\frac{K-r}{2}\right)\Gamma\left(i+j-(s+t)+\frac{r}{2}\right)}{(\sigma^2+1)^{s+t+K-r}(\sigma^2+2)^{i+j-(s+t)+\frac{r}{2}}\left(\Gamma\left(\frac{K-r}{2}\right)\right)^2\Gamma\left(\frac{r}{2}\right)},
\end{align}
where we have used the fact that if, $aU\sim \chi^2_k$ $(a>0)$, for $p,q\ge 0$, 
\begin{align}
\expect{U^pe^{-qU}} & = \int_0^\infty \left(\frac{u}{a}\right)^pe^{-uq/a}\frac{e^{-u/2}u^{k/2-1}}{\Gamma\left(\frac{k}{2}\right)2^{k/2}}du =\frac{\Gamma\left(p+\frac{k}{2}\right)}{a^p(\frac{q}{a}+\frac{1}{2})^{p+k/2}2^{k/2}\Gamma\left(\frac{k}{2}\right)}.
\end{align}
Therefore, $\sum_{\substack{S,T\subset [n]:\\ \abs{S}=\abs{T}=K}}\expect{Z_S^iZ_T^je^{-(Z_S+Z_T)}} = \sum_{r=0}^K N_r h_{ij}(r),$
where $N_r\ (0\le r\le K)$ is the number of pairs of subsets $S,T$ of $[n]$ such that $\abs{S}=\abs{T}=K$ and $\abs{S\cap T}=r$. An expression of $N_r$ can be obtained in the following way. First note that one can choose a subset $S\subset [n]$ with $\abs{S}=K$ in $\binom{n}{K}$ distinct ways. Now, for each such subset $S$, one can create a subset $T\subset [n]$ with $\abs{T}=K,\ \abs{S\cap T}=r$, by selecting $r$ elements from $S$ and $K-r$ elements from $[n]\setminus S$. Note that when $K-r\le n-K$, for each $S$, this can be done in $\binom{K}{r}\binom{n-K}{K-r}$ ways. However, if $K-r>n-K$, then one cannot find a set $T$ with $\abs{T}=K$ and $\abs{T\cap S}=r$. Therefore, the total number of ways of choosing such $S,T$, is found to be $N_r=\binom{n}{K}\binom{K}{r}\binom{n-K}{K-r}$ when $r\ge r_{\min}$ where $r_{\min}=\max\{0,2K-n\}$, and $N_r=0$ if $r<2K-n$. Consequently, using Eq.~\eqref{eq:mmse-second-term-preliminary-expression} along with the expression of $h_{ij}(r)$ from Eq.~\eqref{eq:h_ij-definition} as well as the expression of $N_r,\ 0\le r\le K$, one obtains,
\begin{align}
\expect{\left(\expect{\norm{\bvec{X}}\mid \bvec{Y}=\bvec{y}}\right)^2} 
 & =\frac{2\left(\frac{\sigma^2}{\sigma^2+1}\right)^{J+1}}{\binom{n}{K}}\sum_{\substack{r_{\min}\le r\le K,\\ i,j\ge 0}}\binom{K}{r}\binom{n-K}{K-r}a_ia_jh_{ij}(r).
\end{align}  
%
Using $\expect{\norm{\bvec{X}}^2}=n$ and plugging the above expression in Eq.~\eqref{eq:mmse-basic-expression} one obtains Eq.~\eqref{eq:mmse-expression}. 
\subsection{Proof of Theorem~\ref{thm:asymptotic-mmse}}
\label{sec:proof-thm-asymptotic-mmse}
\subsubsection{Finding $\lim_{\sigma\to 0}\texttt{mmse}$ and $\lim_{\sigma\to \infty}\texttt{mmse}$}
\label{sec:mmse-expression-with-sigma}
To evaluate $\lim_{\sigma\to 0}\texttt{mmse}$ as well as $\lim_{\sigma\to \infty}\texttt{mmse}$, we first observe from Eq.~\eqref{eq:mmse-expression-conditional-simple-expectation} that one can express the MMSE estimator, given $\bvec{B}=\bvec{b}$, as the expectation of the non-negative random variable $Z$, such that we can decompose $Z^2$ in the following way (see the description of $Z^2$ following Eq.~\eqref{eq:mmse-expression-conditional-simple-expectation}) $Z^2=\norm{\bvec{X}_2}^2 + \norm{\frac{\bvec{X}_4}{\sqrt{\sigma^2+1}} + \frac{\sigma}{\sqrt{\sigma^2+1}}\bvec{X}_3}^2$, where $\bvec{X}_2\sim \mathcal{N}(\bvec{0}_{S^C},\bvec{I}_{S^C})$ and $\bvec{X}_3,\bvec{X}_4\sim \mathcal{N}(\bvec{0}_S,\bvec{I}_S)$, where $S=\supp{\bvec{b}}$ and $\bvec{X}_2$ and $\bvec{X}_3$ are independent. This follows since $\bvec{Y}_S\sim\mathcal{N}\left(\bvec{0}_S,(\sigma^2+1)\bvec{I}_S\right)$. Therefore, the second term in the expression of the \texttt{mmse} in~\eqref{eq:mmse-basic-expression} can be expressed as $\expectsuff{S}{\expectsuff{\bvec{X}_4}{\left(\expectsuff{\bvec{X}_2,\bvec{X}_3}{Z}\right)^2}}$. Now observe that $Z^2\stackrel{a.s.}{\to} Z_1^2 = \norm{\bvec{X}_2}^2+\norm{\bvec{X}_4}^2$ as $\sigma\to 0$, and $Z^2\stackrel{a.s.}{\to}Z_2^2=\norm{\bvec{X}_2}^2+\norm{\bvec{X}_3}^2$, as $\sigma\to \infty$. It follows that $Z\stackrel{a.s.}{\to}\abs{Z_1}$ as $\sigma\to 0$ and $Z\stackrel{a.s.}{\to}\abs{Z_2}$ as $\sigma\to \infty$. Furthermore, $Z^2\le W^2=\norm{\bvec{X}_2}^2 + 2\left(\norm{\bvec{X}_4}^2 + \norm{\bvec{X}_3}^2\right)$, where $\expect{W^2}=J + 4K<\infty$. Therefore, $Z\le \abs{W}$, with $\expect{\abs{W}}<\infty$. Therefore, by the dominated convergence theorem, $\expect{Z}\to \expect{\abs{Z_1}}$ as $\sigma\to 0$ and $\expect{Z}\to \expect{\abs{Z_2}}$ as $\sigma\to \infty$.  

Now note that $Z_2^2$ is distributed as a $\chi^2_n$ random variable, so that $\expectsuff{\bvec{X}_2,\bvec{X}_3}{\abs{Z}_2}= \frac{\sqrt{2}\Gamma\left(\frac{n+1}{2}\right)}{\Gamma\left(\frac{n}{2}\right)}$, which implies from~Eq.~\eqref{eq:mmse-basic-expression} that $\lim_{\sigma\to \infty}\texttt{mmse} = n - 2\left(\frac{\Gamma\left(\frac{n+1}{2}\right)}{\Gamma\left(\frac{n}{2}\right)}\right)^2.$
On the other hand, \begin{align}
\label{eq:mmse-sigma-0-preliminary}
\lim_{\sigma\to 0}\texttt{mmse} & = n - \expectsuff{S}{\expectsuff{\bvec{X}_4}{\left(\expectsuff{\bvec{X}_2}{\abs{Z_1}\mid \bvec{X}_4}\right)^2}}.
\end{align}
In order to evaluate the RHS of Eq.~\eqref{eq:mmse-sigma-0-preliminary},
let us define the random variables $V_1=\norm{\bvec{X}_2}^2,\ V_2 = \norm{\bvec{X}_4}^2$, so that $V_1,V_2$ are distributed as $\chi^2_J$ and $\chi^2_K$ random variables respectively, and $\abs{Z_1}=\sqrt{V_1+V_2}$. Then one obtains, \begin{align}
\lefteqn{\expectsuff{\bvec{X}_4}{\left(\expectsuff{\bvec{X}_2}{\abs{Z_1}}\right)^2} = \expectsuff{V_2}{\left(\expectsuff{V_1}{\sqrt{V_1+V_2}}\right)^2} = \int_0^\infty \left(\int_0^\infty\sqrt{v_1+v_2}\frac{e^{-v_1/2}v_1^{J/2-1}}{2^{J/2}\Gamma\left(\frac{J}{2}\right)}dv_1\right)^2 \frac{e^{-v_2/2}v_2^{K/2-1}}{2^{K/2}\Gamma\left(\frac{K}{2}\right)}dv_2} & &\nonumber\\
\ &
\label{eq:mmse-sigma=0-expectation-expression} =\frac{\int_0^\infty\int_0^\infty\int_0^\infty\sqrt{(v_1+v_2)(v_3+v_2)}e^{-(v_1+v_2+v_3)/2}(v_1v_3)^{J/2-1}v_2^{K/2-1} dv_1dv_2dv_3}{2^{J+K/2}\left(\Gamma\left(\frac{J}{2}\right)\right)^2\Gamma\left(\frac{K}{2}\right)} = \frac{16I(J,K)}{\left(\Gamma\left(\frac{J}{2}\right)\right)^2\Gamma\left(\frac{K}{2}\right)},
\end{align}
where \begin{align}
I(J,K) & = \int_0^\infty \int_0^\infty \int_0^\infty \sqrt{(x^2+y^2)(z^2+y^2)}e^{-(x^2+y^2+z^2)}(xz)^{J-1}y^{K-1}dxdydz,
\end{align}
which is obtained by using the transformation of variables $v_1 = 2x^2,\ v_2=2y^2,\ v_3=2z^2$. Now, using the transformation of variables $x = r\cos a,\ y=r\sin a\cos b,\ z = r\sin a\sin b$ one can obtain, \begin{align}
\lefteqn{I(J,K)} & & \nonumber\\
\label{eq:I(J,K)-expression}
\ & = \left(\int_0^{\infty} e^{-r^2}r^{n+J+1}dr\right)\int_0^{\pi/2}\int_0^{\pi/2}\sqrt{1-\sin^2a\sin^2b}(\sin a)^n(\cos a)^{J-1} (\cos b)^{K-1} (\sin b)^{J-1} da db.
\end{align}
Now using $\sqrt{1-x}=\sum_{l\ge 0}\frac{\left(-\frac{1}{2}\right)_l}{l!}x^l$ for $\abs{x}\le 1$,\begin{align}
\lefteqn{\int_0^{\pi/2}\int_0^{\pi/2}\sqrt{1-\sin^2a\sin^2b}(\sin a)^n(\cos a)^{J-1} (\cos b)^{K-1} (\sin b)^{J-1} da db} & &\nonumber\\
\ & = \sum_{l=0}^\infty \frac{\left(-\frac{1}{2}\right)_l}{l!}\int_0^{\pi/2}\int_0^{\pi/2} (\sin a)^{n+2l}(\cos a)^{J-1} (\cos b)^{K-1}(\sin b)^{J+2l-1}da db\nonumber\\
\ & = \sum_{l=0}^\infty \frac{\left(-\frac{1}{2}\right)_l}{l!}\frac{\Gamma\left(\frac{n+2l+1}{2}\right)\Gamma\left(\frac{J}{2}\right)}{2\Gamma\left(\frac{n+2l+J+1}{2}\right)}\frac{\Gamma\left(\frac{J+2l}{2}\right)\Gamma\left(\frac{K}{2}\right)}{2\Gamma\left(\frac{n+2l}{2}\right)} = \frac{\Gamma\left(\frac{n+1}{2}\right)\left(\Gamma\left(\frac{J}{2}\right)\right)^2\Gamma\left(\frac{K}{2}\right)}{4\Gamma\left(\frac{n+J+1}{2}\right)\Gamma\left(\frac{n}{2}\right)}\sum_{l=0}^\infty \frac{\left(-\frac{1}{2}\right)_l}{l!}\frac{\left(\frac{n+1}{2}\right)_l\left(\frac{J}{2}\right)_l}{\left(\frac{n+J+1}{2}\right)_l\left(\frac{n}{2}\right)_l}\nonumber\\
\label{eq:mmse-sigma=0-angular-integral}
\ & = \frac{\Gamma\left(\frac{n+1}{2}\right)\left(\Gamma\left(\frac{J}{2}\right)\right)^2\Gamma\left(\frac{K}{2}\right)}{4\Gamma\left(\frac{n+J+1}{2}\right)\Gamma\left(\frac{n}{2}\right)}\ _3F_2\left(\frac{n+1}{2},\frac{J}{2},-\frac{1}{2};\frac{n+J+1}{2},\frac{n}{2};1\right).
\end{align}
Therefore, from Eqs.~\eqref{eq:mmse-sigma=0-expectation-expression},~\eqref{eq:I(J,K)-expression},~\eqref{eq:mmse-sigma=0-angular-integral} and using the fact that $\int_0^{\infty} e^{-r^2}r^{n+J+1}dr = \frac{1}{2}\Gamma\left(\frac{n+J}{2}+1\right)$ along with some further easy simplifications, one obtains,  \begin{align}
\lim_{\sigma\to 0}\texttt{mmse}
 & = n - (n+J)\frac{B\left(\frac{n+J}{2},\frac{n+1}{2}\right)}{B\left(\frac{n+J+1}{2},\frac{n}{2}\right)}\ _3F_2\left(\frac{n+1}{2},\frac{J}{2},-\frac{1}{2};\frac{n+J+1}{2},\frac{n}{2};1\right).
\end{align}
\subsubsection{Finding $\lim_{n\to \infty}\texttt{mmse}/n$}
\label{sec:mmse-n-ratio-large-n}
We now proceed to find an upper bound of $\texttt{mmse}$ by keeping the $K/n $ fixed to $p\ (1/n\le p\le 1)$. First observe that one can rewrite Eq.~\eqref{eq:mmse-estimator-expression} in the following form: \begin{align}
\expect{\left(\expect{\norm{\bvec{X}}\mid \bvec{Y}=\bvec{y}}\right)^2} & = 2\left(\frac{\sigma^2}{\sigma^2+1}\right)^{J+1}\expect{\left(\expect{L_S!a_{L_S}\mid \bvec{Y}=\bvec{y}}\right)^2},
\end{align} 
where $L_S$ is Poisson distributed with parameter $Z_S$, conditioned on the set $S$. Since $\texttt{mmse}=n-\expect{\left(\expect{\norm{\bvec{X}}\mid \bvec{Y}=\bvec{y}}\right)^2}$, to find an upper bound on $\texttt{mmse}$, it is enough to find a lower bound of $2\left(\frac{\sigma^2}{\sigma^2+1}\right)^{J+1}\expect{\left(\expect{L_S!a_{L_S}\mid \bvec{Y}=\bvec{y}}\right)^2}$, such that $K/n =p$. 

Now, from Eq.~\eqref{eq:al-definition}, it can be easily observed that $L_S!a_{L_S} = e^{1/(\sigma^2+1)}\expectsuff{W}{\left(\frac{J}{2}\right)_W\frac{\Gamma\left(\frac{n+1}{2}+L_S+W\right)}{\Gamma\left(\frac{n}{2}+L_S+W\right)}},$
where $W$ is a Poisson random variable (independent of $L_S$) with parameter $1/(\sigma^2+1)$. For large $n$, applying the Stirling's approximation $\Gamma(n)\approx n^ne^{-n}\sqrt{2\pi n}$, one obtains, for arbitrary non-negative integers $L,l$, $\frac{\Gamma\left(\frac{n+1}{2}+L+l\right)}{\Gamma\left(\frac{n}{2}+L+l\right)} \approx  \sqrt{\frac{\frac{n-1}{2}+L+l}{e}}\left(1+\frac{1}{2\left(\frac{n}{2}+L+l-1\right)}\right)^{\frac{n-1}{2}+L+l} \approx \left(\frac{n}{2}+L+l\right)^{1/2}.$
Therefore, for large $n$, \begin{align}
L_S!a_{L_S} & \approx e^{1/(\sigma^2+1)} \expectsuff{W}{\left(\frac{J}{2}\right)_W\sqrt{\frac{n}{2}+L_S+W}}.
\end{align}
Therefore, for large $n$, $\texttt{mmse} \approx n - 2e^{2/(\sigma^2+1)}\left(\frac{\sigma^2}{\sigma^2+1}\right)^{J+1}H_n$, where $H_n = \expect{\left(\expect{\left(\frac{J}{2}\right)_W\sqrt{\frac{n}{2}+L_S+W}\mid \bvec{Y}=\bvec{y}}\right)^2}.$ It is now enough to find a lower bound of $H_n$ with $K/n=p$. 
Before proceeding further, we note that one can apply the Taylor's expansion to get a quadratic expansion lower bound of the function $\sqrt{1+x}$ around $a$ (for arbitrary $a>0$) as $\sqrt{1+x} \ge  \sqrt{1+a}+\frac{x-a}{2\sqrt{1+a}}-\frac{(x-a)^2}{8}.$
Therefore, writing $Q = L_S+W$, 
\begin{align}
\sqrt{\frac{n}{2}+Q} & \ge \sqrt{\frac{n}{2}}\left[\sqrt{1+a}+\frac{\frac{2Q}{n}-a}{2\sqrt{1+a}}-\frac{\left(\frac{2Q}{n}-a\right)^2}{8}\right]\nonumber\\
\ \implies \expectsuff{L_S}{\sqrt{\frac{n}{2}+L_S+W}\mid Z_S} & \ge \sqrt{\frac{n}{2}}\left[\sqrt{1+a}+\frac{\expectsuff{L_S}{\frac{2L_S}{n}\mid Z_S}+\frac{2W}{n}-a}{2\sqrt{1+a}}-\frac{\expectsuff{L_S}{\left(\frac{2L_S+2W}{n}-a\right)^2\mid Z_S}}{8}\right]\nonumber\\
\ & =\sqrt{\frac{n}{2}}\left[\sqrt{1+a}+\frac{\frac{2Z_S+2W}{n}-a}{2\sqrt{1+a}}-\frac{\frac{4(Z_S^2+Z_S)}{n^2}+\frac{4Z_S}{n}\left(\frac{2W}{n}-a\right)+\left(\frac{2W}{n}-a\right)^2}{8}\right],  
\end{align}
where in the last step we have used $\expect{L_S\mid Z_S}=Z_S$ and $\expect{L_S^2\mid Z_S}=Z_S^2+Z_S$. Now, note that \begin{align}
\expectsuff{S}{Z_S\mid \bvec{Y}} & =\frac{1}{\binom{n}{K}}\sum_{\substack{S\subset [n]:\\\abs{S}=K}} \frac{\norm{\bvec{Y}_S}^2}{2\sigma^2(\sigma^2+1)} = \frac{1}{\binom{n}{K}} \frac{\binom{n-1}{K-1}\norm{\bvec{Y}}^2}{2\sigma^2(\sigma^2+1)}= \frac{K}{2n\sigma^2(\sigma^2+1)}\norm{\bvec{Y}}^2,\\
\expectsuff{S}{Z_S^2\mid \bvec{Y}} & = \frac{1}{\binom{n}{K}}\sum_{\substack{S\subset [n]:\\\abs{S}=K}} \frac{\norm{\bvec{Y}_S}^4}{4\sigma^4(\sigma^2+1)^2} = \frac{1}{4\binom{n}{K}\sigma^4(1+\sigma^2)^2}\sum_{\substack{S\subset [n]:\\\abs{S}=K}}\left(\sum_{i\in S}Y_i^4+2\sum_{\substack{i<j:\\i,j\in S}}Y_i^2Y_j^2\right)\nonumber\\
\ & = \frac{1}{4\binom{n}{K}\sigma^4(1+\sigma^2)^2} \left[\binom{n-1}{K-1}\opnorm{\bvec{Y}}{4}^4+2\binom{n-2}{K-2}\sum_{i<j}Y_i^2Y_j^2\right]\nonumber\\
\ & = \frac{1}{4\sigma^4(\sigma^2+1)^2}\left[\frac{K}{n}\opnorm{\bvec{Y}}{4}^4+\frac{2K(K-1)}{n(n-1)}\sum_{i<j}Y_i^2Y_j^2\right].
\end{align}
Furthermore, $\expect{Y_i^2} = \sigma^2+1,\ \expect{Y_i^4} = 3(\sigma^2+1)^2$,
so that 
\begin{align}
\expect{Z_S} & = \frac{K}{2\sigma^2},\ \expect{Z_S^2} = \frac{3K+K(K-1)}{4\sigma^4}= \frac{K^2+2K}{4\sigma^4}.
\end{align} 
Therefore, using Jensen's inequality, $H_n \ge  \frac{n}{2}\left(\expectsuff{W}{\left(\frac{J}{2}\right)_W\theta_W(a)}\right)^2,$
where \begin{align}
\theta_W(a) & = \sqrt{1+a}+\frac{\frac{2\expect{Z_S}+2W}{n}-a}{2\sqrt{1+a}}-\frac{\frac{4\expect{Z_S^2+Z_S}}{n^2}+\frac{4\expect{Z_S}}{n}\left(\frac{2W}{n}-a\right)+\left(\frac{2W}{n}-a\right)^2}{8}\nonumber\\
\ & = \sqrt{1+a}+\frac{\frac{p}{\sigma^2}+\frac{2W}{n}-a}{2\sqrt{1+a}}-\frac{p\left(\frac{p}{\sigma^4}+\frac{2(\sigma^2+1)}{n\sigma^4}\right)+\frac{2p}{\sigma^2}\left(\frac{2W}{n}-a\right)+\left(\frac{2W}{n}-a\right)^2}{8}.
\end{align}
Now, using the fact that $W$ is a Poisson random variable with parameter $1/(\sigma^2+1)$, one can obtain the following after some calculations,
\begin{align}
\expectsuff{W}{\left(\frac{J}{2}\right)_W\left(\frac{2W}{n}-a\right)} & = e^{-1/(\sigma^2+1)}\left(\frac{\sigma^2}{\sigma^2+1}\right)^{-J/2}\left(\frac{1-p}{\sigma^2}-a\right),\\
\expectsuff{W}{\left(\frac{J}{2}\right)_W\left(\frac{2W}{n}-a\right)^2} & = e^{-1/(\sigma^2+1)}\left(\frac{\sigma^2}{\sigma^2+1}\right)^{-J/2}\left(\left(\frac{1-p}{\sigma^2}-a\right)^2+\frac{2(1-p)(1+\sigma^2)}{n\sigma^4}\right) .
\end{align}
Therefore, $2e^{2/(\sigma^2+1)}\left(\frac{\sigma^2}{\sigma^2+1}\right)^{J+1} H_n \ge \left(\frac{n\sigma^2}{\sigma^2+1}\right)(g_n(a))^2,$
where \begin{align}
g_n(a) & = \sqrt{1+a}+\frac{\frac{p}{\sigma^2}+\frac{1-p}{\sigma^2}-a}{2\sqrt{1+a}}-\frac{\frac{p^2}{\sigma^4}+\frac{2p(1+\sigma^2)}{n\sigma^4}+\frac{2p}{\sigma^2}\left(\frac{1-p}{\sigma^2}-a\right)+\left(\frac{1-p}{\sigma^2}-a\right)^2+\frac{2(1-p)(1+\sigma^2)}{n\sigma^4}}{8}\nonumber\\
\ & = \sqrt{1+a}+\frac{\frac{1}{\sigma^2}-a}{2\sqrt{1+a}}-\frac{\left(\frac{1}{\sigma^2}-a\right)^2+\frac{2(\sigma^2+1)}{n\sigma^4}}{8}.
\end{align}
Since $a$ was chosen to be an arbitrary positive real number, we can choose $a=\frac{1}{\sigma^2}$ to obtain, $2e^{2/(\sigma^2+1)}\left(\frac{\sigma^2}{\sigma^2+1}\right)^{J+1} H_n \ge \left(\frac{n\sigma^2}{\sigma^2+1}\right)(g(1/\sigma^2))^2=n\left(1-\frac{\sqrt{\sigma^2+1}}{4n\sigma^3}\right)^2.$
Therefore, for large $n$ and large $K$ such that $K/n=p\ (1/n\le p\le 1)$, $\frac{\texttt{mmse}}{n} \le 1 - \left(1-\frac{\sqrt{\sigma^2+1}}{4n\sigma^3}\right)^2$, from which, taking $n\to \infty$, we obtain $\lim_{n\to \infty}\frac{\texttt{mmse}}{n}=0$.
\section{Conclusion}
\label{sec:conclusion}
In this paper MMSE estimation of $l_2$ norm of a random $n$ dimensional Gaussian vector is considered from AWGN perturbed measurements taken after the input vector undergoes random data missing retaining only $K\ (0\le K\le n)$ of its entries while the other entries are set to $0$. The expressions of the MMSE estimator and its MSE are derived and several asymptotic results are derived which which generalize the results in~\cite{dytso2019estimating} where $K=n$ was considered. The results show that even with large missing entries and consequently small $K$, the MMSE estimator can still be pretty close to the one obtained without the missing data phenomenon.
\appendix
\section{Derivation of Eq.~\eqref{eq:r-function-evaluation}}
\label{sec:appendix-evaluation-of-r-function}
We use the series expansion of the confluent hypergeometric function ${}_0F_1$ to obtain, 
\begin{align}
R(\alpha,\beta,\gamma,\nu,\epsilon) & = \int_0^1e^{-\alpha x}x^{\beta-1}(1-x)^{\gamma-1}\sum_{l\ge 0}\frac{(\epsilon x)^l}{(\nu)_l l!}dx \stackrel{(\zeta_1)}{=} e^{-\alpha}\int_0^1 e^{\alpha x}x^{\gamma-1}(1-x)^{\beta-1}\sum_{l\ge 0}\frac{(\epsilon (1-x))^l}{(\nu)_l l!}dx\nonumber\\
\ & \stackrel{(\zeta_2)}{=} e^{-\alpha}\int_0^1 \sum_{k,l\ge 0}\frac{\alpha^k\epsilon^l}{k!l!(\nu)_l}x^{k+\gamma-1}(1-x)^{\beta+l-1}dx \stackrel{(\zeta_3)}{=} e^{-\alpha}\sum_{k,l\ge 0}\frac{\alpha^k\epsilon^l}{k!l!(\nu)_l}\int_0^1 x^{k+\gamma-1}(1-x)^{\beta+l-1}dx\nonumber\\
\label{eq:r-function-general-expression}
\ & \stackrel{(\zeta_4)}{=} e^{-\alpha}\sum_{k,l\ge 0}\frac{\alpha^k\epsilon^l}{k!l!(\nu)_l}\frac{\Gamma(k+\gamma)\Gamma(\beta+l)}{\Gamma(\beta+\gamma+k+l)} \stackrel{(\zeta_5)}{=} e^{-\alpha}\frac{\Gamma(\gamma)\Gamma(\beta)}{\Gamma(\beta+\gamma)}\sum_{k,l\ge 0}\frac{\alpha^k\epsilon^l}{k!l!(\nu)_l}\frac{(\gamma)_k(\beta)_l}{(\beta+\gamma)_{k+l}},
\end{align}
where step $(\zeta_1)$ used the variable transformation $x\to 1-x$; step $(\zeta_2)$ used the series expansion of $e^{\alpha x}$; step $(\zeta_3)$ used Fubini's theorem to interchange the order of the integral and the summation; steps $(\zeta_4)$ and $(\zeta_5)$ used the properties of beta and gamma functions respectively. Putting $\beta=\nu$ in Eq.~\eqref{eq:r-function-general-expression} results in the expression of Eq.~\eqref{eq:r-function-evaluation}
\bibliography{estimation-gaussian-vector}

\begin{thebibliography}{10}
\providecommand{\url}[1]{#1}
\csname url@samestyle\endcsname
\providecommand{\newblock}{\relax}
\providecommand{\bibinfo}[2]{#2}
\providecommand{\BIBentrySTDinterwordspacing}{\spaceskip=0pt\relax}
\providecommand{\BIBentryALTinterwordstretchfactor}{4}
\providecommand{\BIBentryALTinterwordspacing}{\spaceskip=\fontdimen2\font plus
\BIBentryALTinterwordstretchfactor\fontdimen3\font minus
  \fontdimen4\font\relax}
\providecommand{\BIBforeignlanguage}[2]{{%
\expandafter\ifx\csname l@#1\endcsname\relax
\typeout{** WARNING: IEEEtran.bst: No hyphenation pattern has been}%
\typeout{** loaded for the language `#1'. Using the pattern for}%
\typeout{** the default language instead.}%
\else
\language=\csname l@#1\endcsname
\fi
#2}}
\providecommand{\BIBdecl}{\relax}
\BIBdecl

\bibitem{dytso2019estimating}
A.~Dytso, M.~Cardone, and H.~V. Poor, ``On estimating the norm of a gaussian
  vector under additive white gaussian noise,'' \emph{IEEE Signal Processing
  Letters}, vol.~26, no.~9, pp. 1325--1329, 2019.

\bibitem{dwork2014algorithmic}
C.~Dwork, A.~Roth \emph{et~al.}, ``The algorithmic foundations of differential
  privacy,'' \emph{Foundations and Trends{\textregistered} in Theoretical
  Computer Science}, vol.~9, no. 3--4, pp. 211--407, 2014.

\bibitem{zhang2007user}
X.~Zhang, E.~A. Jorswieck, B.~Ottersten, and A.~Paulraj, ``User selection
  schemes in multiple antenna broadcast channels with guaranteed performance,''
  in \emph{2007 IEEE 8th Workshop on Signal Processing Advances in Wireless
  Communications}.\hskip 1em plus 0.5em minus 0.4em\relax IEEE, 2007, pp. 1--5.

\bibitem{boyd2006randomized}
S.~Boyd, A.~Ghosh, B.~Prabhakar, and D.~Shah, ``Randomized gossip algorithms,''
  \emph{IEEE transactions on information theory}, vol.~52, no.~6, pp.
  2508--2530, 2006.

\bibitem{chen2000system}
J.-M. Chen and B.-S. Chen, ``System parameter estimation with input/output
  noisy data and missing measurements,'' \emph{IEEE Trans. Signal Process.},
  vol.~48, no.~6, pp. 1548--1558, 2000.

\bibitem{fang2009parameter}
H.~Fang, Y.~Shi, and J.~Wu, ``Parameter estimation with missing input/output
  data,'' in \emph{2009 American Control Conference}.\hskip 1em plus 0.5em
  minus 0.4em\relax IEEE, 2009, pp. 5061--5066.

\bibitem{loh2011high}
P.-L. Loh and M.~J. Wainwright, ``High-dimensional regression with noisy and
  missing data: Provable guarantees with non-convexity,'' in \emph{Advances in
  Neural Information Processing Systems}, 2011, pp. 2726--2734.

\bibitem{chen2013robust}
Y.~Chen, C.~Caramanis, and S.~Mannor, ``Robust sparse regression under
  adversarial corruption,'' in \emph{International Conference on Machine
  Learning}, 2013, pp. 774--782.

\bibitem{mukhopadhyay20-semea}
\BIBentryALTinterwordspacing
S.~Mukhopadhyay, ``Stochastic gradient descent for linear systems with
  sequential matrix entry accumulation,'' \emph{Signal Processing}, vol. 171,
  p. 107494, 2020. [Online]. Available:
  \url{http://www.sciencedirect.com/science/article/pii/S0165168420300372}
\BIBentrySTDinterwordspacing

\bibitem{mukhopadhyay20-imdlms}
S.~{Mukhopadhyay} and A.~{Mukherjee}, ``Imdlms: An imputation based lms
  algorithm for linear system identification with missing input data,''
  \emph{IEEE Transactions on Signal Processing}, pp. 1--1, 2020.

\bibitem{mitzenmacher2009survey}
M.~Mitzenmacher \emph{et~al.}, ``A survey of results for deletion channels and
  related synchronization channels,'' \emph{Probability Surveys}, vol.~6, pp.
  1--33, 2009.

\bibitem{olver2010nist}
F.~W. Olver, D.~W. Lozier, R.~F. Boisvert, and C.~W. Clark, \emph{NIST handbook
  of mathematical functions hardback and CD-ROM}.\hskip 1em plus 0.5em minus
  0.4em\relax Cambridge university press, 2010.

\end{thebibliography}
\end{document}